\newcommand{\Occ}{\mathit{Occ}}
\newcommand{\makeSuffixArray}{\mathit{SUFFIXARRAY}}
\newcommand{\SA}{\mathit{SA}}
\newcommand{\makeLCPArray}{\mathit{LCPARRAY}}
\newcommand{\LCP}{\mathit{LCP}}
\newcommand{\derive}{\mathit{val}}
\newcommand{\deriveInt}{\mathit{itv}}
\newcommand{\VarOcc}{\mathit{vOcc}}
\newcommand{\suffix}{\mathit{suf}}
\newcommand{\prefix}{\mathit{pre}}
\newlength\savedwidth
\newcommand{\wcline}[1]{\noalign{\global\savedwidth\arrayrulewidth\global\arrayrulewidth 1.5pt} \cline{#1}
\noalign{\global\arrayrulewidth\savedwidth}}
\newcommand{\ignore}[1]{}
\author{
  Keisuke Goto 
  \and
  Hideo Bannai 
  \and
  Shunsuke Inenaga 
  \and
  Masayuki Takeda 
}
\institute{
  Department of Informatics, Kyushu University
  \email{\{keisuke.gotou,bannai,inenaga,takeda\}@inf.kyushu-u.ac.jp}\\
}
\title{
  Fast $q$-gram Mining on SLP~Compressed~Strings\thanks{This work was
    supported by KAKENHI 22680014 (HB)}
}
\date{}
\begin{document}
\maketitle
\begin{abstract}
  We present simple and efficient algorithms for calculating
  $q$-gram frequencies on strings represented in compressed form,
  namely, as a straight line program (SLP).
  Given an SLP of size $n$ that represents string $T$,
  we present an $O(qn)$ time and space algorithm
  that computes the occurrence frequencies of
  {\em all} $q$-grams in $T$.
  Computational experiments show that our algorithm and its variation
  are practical for small $q$, actually running faster on various
  real string data, compared to algorithms that work on the
  uncompressed text.
  We also discuss applications in data mining and classification of
  string data, for which our algorithms can be useful.
\end{abstract}
\section{Introduction}
A major problem in managing large scale string data is its sheer size.
Therefore, such data is normally stored in compressed form.
In order to utilize or analyze the data afterwards, 
the string is usually decompressed, where we must again confront the
size of the data.
To cope with this problem, algorithms that work directly on compressed
representations of strings without explicit decompression
have gained attention, especially for the string pattern matching problem~\cite{amir92:_effic_two_dimen_compr_match}
where
algorithms on compressed text can actually run faster than
algorithms on the uncompressed text~\cite{ShibataCIAC2000}.
There has been growing interest in what problems can be
efficiently solved in this kind of 
setting~\cite{lifshits07:_proces_compr_texts,hermelin09:_unified_algor_accel_edit_distan}.

Since there exist many different text compression schemes,
it is not realistic to develop different algorithms for each scheme.
Thus, it is common to consider algorithms on texts represented as
{\em straight line programs} (SLPs)~\cite{NJC97,lifshits07:_proces_compr_texts,hermelin09:_unified_algor_accel_edit_distan}.
An SLP is a context free grammar in the Chomsky normal form that
derives a single string.
Texts compressed by any grammar-based compression algorithms 
(e.g.~\cite{SEQUITUR,LarssonDCC99}) can be represented as SLPs,
and those compressed by the LZ-family (e.g.~\cite{LZ77,LZ78}) can be quickly
transformed to SLPs~\cite{rytter03:_applic_lempel_ziv}.
Recently, even {\em compressed self-indices} based on SLPs have
appeared~\cite{claudear:_self_index_gramm_based_compr},
and SLPs are a promising representation of compressed strings
for conducting various operations.

In this paper, we explore a more advanced field of
application for compressed string processing:
mining and classification on
string data given in compressed form.
Discovering useful patterns hidden in strings as well as
automatic and accurate classification of
strings into various groups, are important problems in
the field of data mining and machine learning with many
applications.
As a first step toward {\em compressed} string mining and
classification, we consider the problem of
finding the occurrence frequencies for all
$q$-grams contained in a given string. 
$q$-grams are important features of string data,
widely used for this purpose in many fields such as text and natural
language processing, and bioinformatics.

In~\cite{inenaga09:_findin_charac_subst_compr_texts},
an $O(|\Sigma|^2n^2)$-time $O(n^2)$-space algorithm
for finding the {\em most frequent} $2$-gram from an 
SLP of size $n$ representing text $T$ over alphabet $\Sigma$ was presented.
In~\cite{claudear:_self_index_gramm_based_compr},
it is mentioned
that the most frequent $2$-gram can be found in $O(|\Sigma|^2n\log n)$-time
and $O(n\log|T|)$-space, if the SLP is pre-processed and a self-index is built. 
It is possible to extend these two algorithms
to handle $q$-grams for $q > 2$, but would respectively require
$O(|\Sigma|^qqn^2)$ and $O(|\Sigma|^qqn\log n)$ time,
since they must essentially enumerate and count the occurrences of all
substrings of length $q$,
regardless of whether the $q$-gram occurs in the string.
Note also that any algorithm that works on the uncompressed text
$T$ requires exponential time in the worst case, since $|T|$ can be as
large as $O(2^n)$.

The main contribution of this paper is an $O(qn)$ time and space
algorithm that computes the occurrence frequencies for {\em all}
$q$-grams in the text, given
an SLP of size $n$ representing the text.
Our new algorithm solves the more general problem
and greatly improves the computational complexity compared to previous work.
We also conduct computational experiments on various real texts,
showing that when $q$ is small, our algorithm and its variation actually run faster
than algorithms that work on the uncompressed text.

Our algorithms have profound applications in the field of string mining
and classification, and several applications and extensions are discussed.
For example, our algorithm leads to an $O(q(n_1+n_2))$ time algorithm for
computing the $q$-gram spectrum kernel~\cite{leslie02:_spect_kernel}
between SLP compressed texts of size $n_1$ and $n_2$.
It also leads to an $O(qn)$ time algorithm for finding
the optimal $q$-gram (or emerging $q$-gram) that discriminates between
two sets of SLP compressed strings, when $n$ is the total size of the
SLPs.


\subsubsection{Related Work}
There exist many works on {\em compressed text
indices}~\cite{navarro07:_compr},
but the main focus there is on fast search for a {\em given} pattern.
The compressed indices basically replace or simulate operations on
uncompressed indices using a smaller data structure.
Indices are important for efficient string processing, 
but note that simply replacing the underlying index used in a mining
algorithm will generally increase time complexities of the algorithm
due to the extra overhead required to access the compressed index.
On the other hand, our approach is a new mining algorithm which exploits
characteristics of the compressed representation to achieve faster
running times.

Several algorithms for finding characteristic sequences from compressed texts have been proposed, e.g.,
finding the longest common substring of two strings~\cite{matsubara_tcs2009},
finding all palindromes~\cite{matsubara_tcs2009}, 
finding most frequent substrings~\cite{inenaga09:_findin_charac_subst_compr_texts}, and
finding the longest repeating substring~\cite{inenaga09:_findin_charac_subst_compr_texts}.
However, none of them have reported results of computational
experiments, implying that this paper is the first to show the
practical usefulness of a compressed text mining algorithm.

\section{Preliminaries}
Let $\Sigma$ be a finite {\em alphabet}.
An element of $\Sigma^*$ is called a {\em string}.
For any integer $q > 0$, an element of $\Sigma^q$ is called an \emph{$q$-gram}.
The length of a string $T$ is denoted by $|T|$. 
The empty string $\varepsilon$ is a string of length 0,
namely, $|\varepsilon| = 0$.
For a string $T = XYZ$, $X$, $Y$ and $Z$ are called
a \emph{prefix}, \emph{substring}, and \emph{suffix} of $T$, respectively.
The $i$-th character of a string $T$ is denoted by $T[i]$ for $1 \leq i \leq |T|$,
and the substring of a string $T$ that begins at position $i$ and
ends at position $j$ is denoted by $T[i:j]$ for $1 \leq i \leq j \leq |T|$.
For convenience, let $T[i:j] = \varepsilon$ if $j < i$.

For a string $T$ and integer $q \geq 0$, let $\prefix(T,q)$ and $\suffix(T,q)$
represent respectively, the length-$q$ prefix and suffix of $T$.
That is, $\prefix(T,q) = T[1:\min(q,|T|)]$ and
$\suffix(T,q) = T[\max(1,|T|-q+1):|T|]$.

For any strings $T$ and $P$,
let $\Occ(T,P)$ be the set of occurrences of $P$ in $T$, i.e.,
$\Occ(T,P) = \{k > 0 \mid T[k:k+|P|-1] = P\}$.
The number of elements $|\Occ(T,P)|$ is called
the \emph{occurrence frequency} of $P$ in $T$.

\subsection{Straight Line Programs}
\begin{algorithm2e}[t]
  \caption{Calculating $\VarOcc(X_i)$ for all $1\leq i \leq n$.}
  \label{algo:varocc}
  \KwIn{SLP ${\mathcal T} = \{X_i\}_{i=1}^n$ representing string $T$.}
  \KwOut{$\VarOcc(X_i)$  for all $1\leq i\leq n$}
  $\VarOcc[X_n] \leftarrow 1$\;
  \lFor{$i\leftarrow 1$ \KwTo $n-1$}{
    $\VarOcc[X_i] \leftarrow 0$\;
  }
  \For{$i\leftarrow n$ \KwTo $2$}{
    \If{ $X_i = X_{\ell}X_r$ }{
      $\VarOcc[X_{\ell}] \leftarrow \VarOcc[X_{\ell}] + \VarOcc[X_i]$;
      $\VarOcc[X_r] \leftarrow \VarOcc[X_r] + \VarOcc[X_i]$\;
    }
  }
\end{algorithm2e}

\begin{wrapfigure}[13]{r}{0.5\textwidth}
  \vspace{-1.56cm}
\centerline{\includegraphics[width=0.48\textwidth]{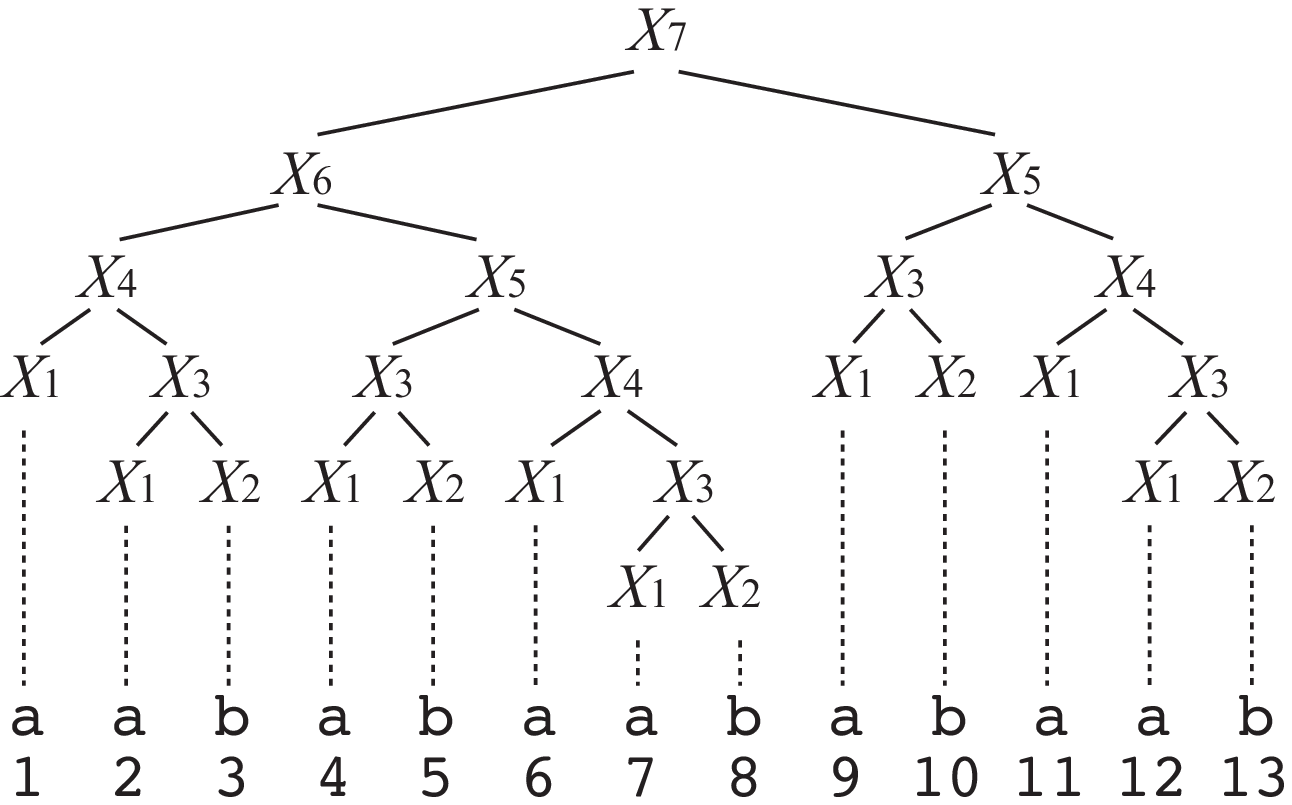}}
\caption{
  The derivation tree of
  SLP $\mathcal T = \{X_i\}_{i=1}^{7}$ with $X_1 = \mathtt{a}$, $X_2 = \mathtt{b}$, $X_3 = X_1X_2$,
  $X_4 = X_1X_3$, $X_5 = X_3X_4$, $X_6 = X_4X_5$, and $X_7 = X_6X_5$,
  representing string $T = \derive(X_7) = \mathtt{aababaababaab}$.
}
\label{fig:SLP}
\end{wrapfigure}

A {\em straight line program} ({\em SLP}) $\mathcal T$ is a sequence of assignments 
$X_1 = expr_1, X_2 = expr_2, \ldots, X_n = expr_n$,
where each $X_i$ is a variable and each $expr_i$ is an expression, where
$expr_i = a$ ($a\in\Sigma$), or $expr_i = X_{\ell} X_r$~($\ell,r < i $).
Let $\derive(X_i)$ represent the string derived from $X_i$.
When it is not confusing, we identify a variable $X_i$
with $\derive(X_i)$.
Then, $|X_i|$ denotes the length of the string $X_i$ derives.
An SLP $\mathcal{T}$ {\em represents} 
the string $T = \derive(X_n)$.
The \emph{size} of the program $\mathcal T$ is the number $n$ of
assignments in $\mathcal T$. (See Fig.~\ref{fig:SLP})


The substring intervals of $T$ that each variable derives can 
be defined recursively as follows: $\deriveInt(X_n) = \{ [1:|T|]\}$, and
$\deriveInt(X_i) =
\{ [u + |X_\ell|:v] \mid X_k = X_\ell X_i, [u:v] \in \deriveInt(X_k) \}
\cup
\{ [u:u+|X_i|-1] \mid X_k = X_i X_r, [u:v] \in \deriveInt(X_k) \}
$ for $i < n$.
For example,  $\deriveInt(X_5) = \{ [4:8], [9:13]\}$ in Fig.~\ref{fig:SLP}.
Considering the transitive reduction of set inclusion,
the intervals $\cup_{i=1}^n \deriveInt(X_i)$ naturally form a
binary tree (the derivation tree).
Let $\VarOcc(X_i) = |\deriveInt(X_i)|$ denote 
the number of times a variable $X_i$ occurs in the derivation of $T$.
$\VarOcc(X_i)$ for all $1\leq i\leq n$ can be computed in $O(n)$
time by a simple iteration on the variables,
since $\VarOcc(X_n) = 1$ and for $i < n$,
$\VarOcc(X_i) =
\sum \{ \VarOcc(X_k) \mid X_k = X_\ell X_i \}
+
\sum \{\VarOcc(X_k) \mid X_k = X_iX_r \}$.
(See Algorithm~\ref{algo:varocc})

\subsection{Suffix Arrays and LCP Arrays}
The suffix array $\SA$~\cite{manber93:_suffix} of any string $T$
is an array of length $|T|$ such that
$\SA[i] = j$, where $T[j:|T|]$ is the $i$-th lexicographically smallest suffix of $T$.
The \emph{lcp} array of any string $T$ is an array of length $|T|$ such that
$\LCP[i]$ is the length of the longest common prefix of
$T[\SA[i-1]:|T|]$ and $T[\SA[i]:|T|]$ for $2 \leq i \leq |T|$, 
and $\LCP[1] = 0$.
The suffix array for any string of length $|T|$
can be constructed in $O(|T|)$ 
time~(e.g.~\cite{Karkkainen_Sanders_icalp03})
assuming an integer alphabet.
Given the text and suffix array, the lcp array can also be calculated
in $O(|T|)$ time~\cite{Kasai01}.




\section{Algorithm}
\subsection{Computing $q$-gram Frequencies on Uncompressed Strings}
\label{subsection:uncompressed}
We describe two algorithms (Algorithm~\ref{algo:naive}
and Algorithm~\ref{algo:naive_sa}) for computing
the $q$-gram frequencies of a given uncompressed string $T$.

A na\"ive algorithm for computing the
$q$-gram frequencies is given in Algorithm~\ref{algo:naive}.
The algorithm constructs an associative array, where keys consist
of $q$-grams, and the values correspond to the occurrence frequencies
of the $q$-grams.
The time complexity depends on the implementation of the associative
array,
but requires at least $O(q|T|)$ time 
since each $q$-gram is considered explicitly, and
the associative array is accessed $O(|T|)$ times:
e.g. $O(q|T|\log|\Sigma|)$ time and $O(q|T|)$ space using a
simple trie.
\begin{algorithm2e}[t]
      \SetKwInput{KwOut}{Report}
      \SetKw{KwReport}{Report}
      \KwIn{string $T$, integer $q \geq 1$}
      \KwOut{$(P, |\Occ(T,P)|)$ for all $P\in\Sigma^q$ where
        $\Occ(T,P) \neq \emptyset$.}
      $\mathbf{S} \leftarrow \emptyset$\tcp*[l]{empty associative array}
      \For{$i\leftarrow 1$ \KwTo $|T|-q+1$}{
        $\mathit{qgram} \leftarrow T[i:i+q-1]$\;
        \lIf{$\mathit{qgram} \in\mathrm{keys}(\mathbf{S})$}{\label{algo:naive:increment}
          $\mathbf{S}[\mathit{qgram}] \leftarrow
          \mathbf{S}[\mathit{qgram}] + 1$\;      
        }
        \lElse{
          $\mathbf{S}[\mathit{qgram}] \leftarrow 1$\tcp*[l]{new $q$-gram}\label{algo:naive:addweight}
        }
      }
      \lFor{$\mathit{qgram}\in\mathrm{keys}(\mathbf{S})$}{\KwReport{$(\mathit{qgram},\mathbf{S}[\mathit{qgram}])$}}
      \caption{A na\"ive algorithm for computing $q$-gram frequencies.}
      \label{algo:naive}
      \end{algorithm2e}
    \begin{algorithm2e}[t]
      \SetKwInput{KwOut}{Report}
      \SetKw{KwReport}{Report}
      \SetKw{KwOr}{or}
      \SetKw{KwAnd}{and}
      \KwIn{string $T$, integer $q \geq 1$}
      \KwOut{
        $(i, |\Occ(T,P)|)$
        for all $P\in\Sigma^q$ and some position $i\in \Occ(T,P)$.
      }
      $\SA \leftarrow \makeSuffixArray(T)$;
      $\LCP \leftarrow \makeLCPArray(T,SA)$;
      $\mathit{count} \leftarrow 1$\;  
      \For{$i\leftarrow 2$ \KwTo $|T|+1$}{
        \If{$i = |T|+1$ \KwOr $\LCP[i] < q$}{
          \lIf{$\mathit{count} > 0$}{
            \KwReport{$(\SA[i-1], \mathit{count})$}; $\mathit{count} \leftarrow 0$\;
          }          
        }
        \lIf{$i \leq |T|$ \KwAnd $\SA[i] \leq |T| - q + 1$}{
          $\mathit{count} \leftarrow \mathit{count}+1$\;\label{algo:naive_sa:increment}
        }
      }
      \caption{A linear time algorithm for computing $q$-gram frequencies.}
      \label{algo:naive_sa}
    \end{algorithm2e}

The $q$-gram frequencies of string $T$ can be calculated in
$O(|T|)$ time using suffix array $\SA$ and lcp array $\LCP$,
as shown in Algorithm~\ref{algo:naive_sa}.
For each $1\leq i\leq |T|$, the suffix $\SA[i]$ represents
an occurrence of $q$-gram $T[\SA[i]:\SA[i]+q-1]$, 
if the suffix is long enough, i.e. $\SA[i] \leq |T| - q +1$.
The key is that since the suffixes are lexicographically
sorted, intervals on the suffix array where the values in the lcp array
are at least $q$ represent occurrences of the same $q$-gram.
The algorithm runs in $O(|T|)$ time, since $\SA$ and $\LCP$ can be
constructed in $O(|T|)$. The rest is a simple $O(|T|)$ loop.
A technicality is that we encode the output for a $q$-gram as
one of the positions in the text where the $q$-gram occurs,
rather than the $q$-gram itself.
This is because there can be a total of $O(|T|)$ different $q$-grams,
and if we output them as length-$q$ strings,
it would require at least $O(q|T|)$ time.

\subsection{Computing $q$-gram Frequencies on SLP}

We now describe the core idea of our algorithms,
and explain two variations which utilize variants of the two algorithms for uncompressed strings 
presented in Section~\ref{subsection:uncompressed}.
For $q = 1$, the $1$-gram frequencies are simply the frequencies of
the alphabet and the output is
$(a,\sum\{\VarOcc(X_i)\mid X_i = a\})$ for each $a\in\Sigma$,
which takes only $O(n)$ time.
For $q \geq 2$, we make use of Lemma~\ref{lemma:mkSLP} below.
The idea is similar to the {\em $mk$ Lemma}~\cite{charikar05:_small_gramm_probl},
but the statement is more specific.
\begin{lemma}
  \label{lemma:mkSLP}
  Let $\mathcal{T} = \{X_i\}_{i=1}^n$ be an SLP that represents string $T$.
  For an interval $[u:v]$ $(1\leq u < v \leq |T|)$,
  there exists exactly one variable $X_i=X_\ell X_r$
  such that for some $[u':v'] \in \deriveInt(X_i)$,
  the following holds:
  $[u:v] \subseteq [u':v']$,
  $u \in [u':u'+|X_\ell|-1] \in \deriveInt(X_\ell)$
  and 
  $v \in [u'+|X_\ell|:v'] \in \deriveInt(X_r)$.
\end{lemma}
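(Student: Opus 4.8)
The plan is to interpret the statement geometrically in the derivation tree of $\mathcal{T}$ and to recognize the desired variable occurrence as the \emph{lowest common ancestor} of the two positions $u$ and $v$. Recall from the preliminaries that the intervals in $\bigcup_{i=1}^n \deriveInt(X_i)$ form a binary tree under set inclusion: the root is $[1:|T|]=\deriveInt(X_n)$, every internal node is an occurrence $[u':v'] \in \deriveInt(X_i)$ of a variable $X_i = X_\ell X_r$ whose two children are the left occurrence $[u':u'+|X_\ell|-1] \in \deriveInt(X_\ell)$ and the right occurrence $[u'+|X_\ell|:v'] \in \deriveInt(X_r)$, and every leaf is a single position. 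The first thing I would make explicit is the structural fact underlying this tree, namely that any two intervals appearing in $\bigcup_{i=1}^n \deriveInt(X_i)$ are either disjoint or nested (a \emph{laminar family}), and that at an internal node the parent interval is partitioned into the left-child interval followed by the right-child interval. Both follow by a straightforward induction on the recursive definition of $\deriveInt$.

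For existence, I would descend the tree from the root, maintaining the invariant that the current node's interval contains $[u:v]$; this holds initially since $1 \le u < v \le |T|$. At an internal node with interval $[u':v']$ and split after position $u'+|X_\ell|-1$, exactly one of three cases occurs: both $u$ and $v$ lie in the left-child interval, both lie in the right-child interval, or $u$ lies in the left-child interval and $v$ in the right-child interval (the remaining combination is impossible because $u < v$ and the left interval precedes the right). In the first two cases I recurse into the corresponding child, preserving the invariant; in the third case the current node is exactly the one claimed, since the three displayed conditions then hold verbatim. Termination follows because each descent strictly shrinks the interval while keeping it of length at least $v-u+1 \ge 2$, so a leaf (interval of length $1$) can never be reached; the split case must therefore eventually occur.

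For uniqueness, I would suppose two occurrences $A$ and $B$ of (possibly distinct) variables both satisfy the conditions and derive a contradiction unless $A=B$. Since both intervals contain $[u:v]$, they are not disjoint, so by the laminar property one contains the other; if they are equal we are done, so assume $A$ properly contains $B$. Then $B$ is a descendant of $A$ lying entirely within one of $A$'s two child intervals. But $A$ satisfying the conditions places $u$ in $A$'s left child and $v$ in $A$'s right child, so neither child of $A$ contains both $u$ and $v$; whereas $B$ satisfying the conditions forces $[u:v] \subseteq B$, a contradiction. Hence $A=B$ as intervals, and since each interval is a single node carrying a single label, the variable and its occurrence are unique.

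The main obstacle I anticipate is not any single case but cleanly establishing the two structural facts that make the tree argument valid — the laminar (disjoint-or-nested) property and the left-before-right partition of each internal interval — directly from the given recursive definition of $\deriveInt$, since the lemma is phrased entirely in terms of intervals rather than an explicitly constructed tree. Once these are in hand, existence and uniqueness are short; the only remaining care is to verify that the two membership conditions involving $\deriveInt(X_\ell)$ and $\deriveInt(X_r)$ are automatically met at the split node, which is immediate from the definition of the children's occurrences.
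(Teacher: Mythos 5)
Your proof is correct and follows essentially the same route as the paper: the paper's own (one-line) proof also identifies $X_i$ as the lowest common ancestor of the leaf intervals $[u:u]$ and $[v:v]$ in the derivation tree. Your write-up simply makes explicit the supporting facts (the laminar structure of $\bigcup_i \deriveInt(X_i)$, existence via descent to the split node, uniqueness via nesting) that the paper leaves implicit.
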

\begin{proof}
  Consider length $1$ intervals $[u:u]$ and $[v:v]$
  corresponding to leaves in the derivation tree.
  $X_i$ corresponds to the lowest common ancestor
  of these intervals in the derivation tree.
  \qed
\end{proof}

\begin{wrapfigure}[9]{r}{0.36\textwidth}
  \vspace{-1.43cm}
  \centerline{\includegraphics[width=0.35\textwidth]{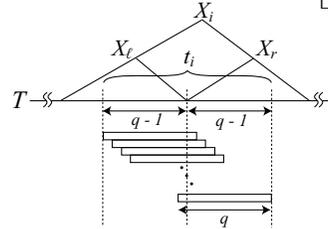}}
  \caption{
    Length-$q$ intervals corresponding to $X_i = X_\ell X_r$.
  }
  \label{fig:SLP-kgram}
\end{wrapfigure}

From Lemma~\ref{lemma:mkSLP}, each occurrence of
a $q$-gram ($q \geq 2$) represented by some length-$q$ interval of $T$,
corresponds to a single variable $X_i = X_{\ell}X_r$, and is split in
two by intervals corresponding to $X_{\ell}$ and $X_r$.
On the other hand, consider all length-$q$ intervals that
correspond to a given variable. Counting the frequencies of the $q$-grams
they represent, and summing them up for all variables give the
frequencies of all $q$-grams of $T$.

%
For variable $X_i = X_\ell X_r$, let 
$t_i = \suffix(X_\ell, q-1) \prefix(X_r,q-1)$.
Then, all $q$-grams represented by length $q$ intervals that 
correspond to $X_i$ are those in $t_i$. (Fig.~\ref{fig:SLP-kgram}).
If we obtain the frequencies of all $q$-grams in $t_i$, and then
multiply each frequency by $\VarOcc(X_i)$, 
we obtain frequencies for the $q$-grams occurring in all intervals
derived by $X_i$.
It remains to sum up the $q$-gram frequencies of $t_i$
for all $1\leq i\leq n$.
We can regard it as obtaining the {\em weighted} $q$-gram
frequencies in the set of strings $\{t_1,\ldots,t_n\}$,
where each $q$-gram in $t_i$ is weighted by $\VarOcc(X_i)$.

We further reduce this problem to a weighted $q$-gram frequency
problem for a single string $z$ as in Algorithm~\ref{algo:slpmain}.
String $z$ is constructed by concatenating $t_i$ 
such that $q \leq |t_i| \leq 2(q-1)$,
and the weights of $q$-grams starting at each position in $z$
is held in array $w$.
On line~\ref{algo:slpmain:append0}, $0$'s instead of $\VarOcc(X_i)$ are appended to
$w$ for the last $q-1$ values corresponding to $t_i$.
This is to avoid counting unwanted $q$-grams that are generated by the
concatenation of $t_i$ to $z$ on line~\ref{algo:slpmain:zappend},
which are not substrings of each $t_i$.
The weighted $q$-gram frequency problem for a single string
(Line~\ref{algo:slpmainweightedfreqs}) can be solved
with a slight modification of Algorithm~\ref{algo:naive} or \ref{algo:naive_sa}.
The modified algorithms are shown respectively in
Algorithms~\ref{algo:weighted_naive} and \ref{algo:weighted_naive_sa}.

\begin{theorem}
  Given an SLP ${\mathcal T} = \{X_i\}_{i=1}^n$ of size $n$
  representing a string $T$,
  the $q$-gram frequencies of $T$ can be computed in $O(qn)$ time
  for any $q > 0$.
\end{theorem}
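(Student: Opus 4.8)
The plan is to establish correctness first and then bound the running time stage by stage, following the reduction sketched above. The case $q=1$ is immediate: the $1$-gram frequency of $a\in\Sigma$ is $\sum\{\VarOcc(X_i)\mid X_i=a\}$, which is computable in $O(n)$ time once Algorithm~\ref{algo:varocc} has supplied all $\VarOcc(X_i)$. So I would concentrate on $q\geq 2$.

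First I would verify the core decomposition. By Lemma~\ref{lemma:mkSLP}, every length-$q$ interval $[u:v]$ of $T$ is associated with exactly one variable $X_i=X_\ell X_r$ at whose boundary it is split, so the multiset of all $q$-gram occurrences in $T$ is partitioned by this associated variable, with nothing double-counted or omitted. Fixing $X_i=X_\ell X_r$, I would show that the $q$-grams split at the $X_i$-boundary within a single occurrence of $X_i$ are exactly the length-$q$ substrings of $t_i=\suffix(X_\ell,q-1)\prefix(X_r,q-1)$ that cross its junction: every position of such a $q$-gram lies within $q-1$ of the boundary, so $t_i$ captures these and no others. Since $X_i$ is derived $\VarOcc(X_i)$ times, each such $q$-gram of $t_i$ contributes with multiplicity $\VarOcc(X_i)$, and summing over all variables yields the exact $q$-gram frequencies of $T$. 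This reduces the task to the weighted $q$-gram counting problem over $\{t_1,\dots,t_n\}$.

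Next I would justify the single-string reduction of Algorithm~\ref{algo:slpmain}. Concatenating the strings $t_i$ with $|t_i|\geq q$ into $z$ creates spurious $q$-grams spanning consecutive blocks; the remedy is to set the weight array $w$ to $\VarOcc(X_i)$ at every start position inside $t_i$ except its last $q-1$ positions, which receive weight $0$ (line~\ref{algo:slpmain:append0}). I would check that a $q$-gram starting at a zeroed position is precisely one that would straddle two blocks, so its contribution vanishes, while each genuine $q$-gram of $t_i$ keeps weight $\VarOcc(X_i)$; hence the weighted frequencies computed on $z$ equal the desired sums.

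The main obstacle is the time bound, which rests on showing $|z|=O(qn)$ and that every step is linear in $|z|$. Computing all $\VarOcc(X_i)$ costs $O(n)$. The strings $\suffix(X_\ell,q-1)$ and $\prefix(X_r,q-1)$ can be obtained bottom-up, forming the length-$(q-1)$ prefix (resp.\ suffix) of $X_i=X_\ell X_r$ by concatenating those of $X_\ell$ and $X_r$ and truncating, so each variable costs $O(q)$ and all of them $O(qn)$; only lengths up to $q-1$ need be tracked. Each retained $t_i$ satisfies $|t_i|\leq 2(q-1)$, so $|z|=O(qn)$ and building $z$ and $w$ is $O(qn)$. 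Finally the weighted problem on $z$ is solved by the modified suffix-array method (Algorithm~\ref{algo:weighted_naive_sa}): the suffix and lcp arrays of $z$ are built in $O(|z|)$ time over an integer alphabet, and the single scan that accumulates weights over maximal runs with $\LCP\geq q$ costs $O(|z|)$ more. Summing the stages gives $O(qn)$ overall, as claimed.
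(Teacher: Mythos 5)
Your proposal is correct and follows essentially the same route as the paper's proof: the same case split ($q=1$ handled via $\VarOcc$, $q\geq 2$ via Lemma~\ref{lemma:mkSLP}), the same reduction through $t_i = \suffix(X_\ell,q-1)\prefix(X_r,q-1)$ weighted by $\VarOcc(X_i)$, the same concatenation into $z$ with zeroed weights at block boundaries, and the same $O(qn)$ accounting (linear $\VarOcc$, $O(q)$-per-variable dynamic programming for prefixes/suffixes, and the suffix-array-based weighted count on $z$). In fact you spell out the correctness of the decomposition and the boundary-weight trick in more detail than the paper, which simply calls it straightforward.
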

\begin{proof}
  Consider Algorithm~\ref{algo:slpmain}.
  The correctness is straightforward from the above arguments, so we
  consider the time complexity.
  Line~\ref{algo:slpmain:varocc} can be computed in $O(n)$ time.
  Line~\ref{algo:slpmain:prefsuf} can be computed in $O(qn)$ time by a simple dynamic
  programming.
  For $\prefix()$:
  If $X_i=a$ for some $a\in\Sigma$, then $\prefix(X_i,q-1) = a$.
  If $X_i=X_{\ell}X_r$ and $|X_{\ell}| \geq q-1$, then $\prefix(X_i, q-1) = \prefix(X_{\ell}, q-1)$.
  If $X_i=X_{\ell}X_r$ and $|X_{\ell}| < q-1$, then    $\prefix(X_i, q-1)
  = \prefix(X_{\ell}, q-1)\prefix(X_r, q - 1 - |X_{\ell}|)$.
  The strings $\suffix()$ can be computed similarly.
  The computation amounts to copying $O(q)$ characters for each
  variable, and thus can be done in $O(qn)$ time.
%
  For the loop at line~\ref{algo:slpmain:mainloop}, since the length of
  string $t_i$ appended to $z$, as well as the number of elements
  appended to $w$ is at most $2(q-1)$ in each loop,
  the total time complexity is $O(qn)$.
  Finally, since the length of $z$ and $w$ is $O(qn)$,
  line~\ref{algo:slpmainweightedfreqs} can be calculated in $O(qn)$
  time
  using the weighted version of Algorithm~\ref{algo:naive_sa} (Algorithm~\ref{algo:weighted_naive_sa}).
  \qed
\end{proof}
Note that the time complexity for using the weighted version of
Algorithm~\ref{algo:naive} for line~\ref{algo:slpmainweightedfreqs}
of Algorithm~\ref{algo:slpmain} would be at least $O(q^2n)$:
e.g. $O(q^2n\log|\Sigma|)$ time and $O(q^2n)$ space using a trie.

\begin{algorithm2e}[t]
  \caption{Calculating $q$-gram frequencies of an SLP for $q\geq 2$}
  \label{algo:slpmain}
  \SetKw{KwAnd}{and}
  \SetKwInput{KwOut}{Report}
  \KwIn{SLP ${\mathcal T} = \{X_i\}_{i=1}^n$ representing string $T$, integer $q\geq 2$.}
  \KwOut{all $q$-grams and their frequencies which occur in $T$.}
  \SetKw{KwReport}{Report}
  Calculate $\VarOcc(X_i)$ for all $1\leq i\leq n$\; \label{algo:slpmain:varocc}  
  Calculate $\prefix(X_i,q-1)$ and $\suffix(X_i,q-1)$ for all $1\leq
  i\leq n-1$ \; \label{algo:slpmain:prefsuf}  
  $z \leftarrow \varepsilon$; $w \leftarrow []$\;
  \For{$i \leftarrow 1$ \KwTo $n$}{\label{algo:slpmain:mainloop}
    \If{$X_i = X_{\ell}X_r$ \KwAnd $|X_i| \geq q$}{
      $t_i = \suffix(X_\ell,q-1)\prefix(X_r,q-1)$;
      $z$.append($t_i$)\; \label{algo:slpmain:zappend}
      \lFor{$j \leftarrow 1$ \KwTo $|t_i|-q+1$}{
        $w$.append($\VarOcc(X_i)$)\;
      }
      \lFor{$j \leftarrow 1$ \KwTo $q-1$}{
        $w$.append(0)\; \label{algo:slpmain:append0}
      }
    }
  }
  \KwReport  
  $q$-gram frequencies in $z$, where each $q$-gram
  $z[i:i+q-1]$
  is {\em weighted} by $w[i]$.\label{algo:slpmainweightedfreqs}
\end{algorithm2e}

\begin{algorithm2e}[t]
  \caption{A variant of Algorithm~\ref{algo:naive} for weighted $q$-gram frequencies.}
  \label{algo:weighted_naive}
  \SetKwInput{KwOut}{Report}
  \SetKw{KwReport}{Report}
  \KwIn{string $T$, array of integers $w$ of length $|T|$, integer $q \geq 1$}
  \KwOut{$(P, \sum_{i\in\Occ(T,P)}w[i])$ for all $P\in\Sigma^q$ where
    $\sum_{i\in\Occ(T,P)}w[i] > 0$.}
  $\mathbf{S} \leftarrow \emptyset$\tcp*[l]{empty associative array}
  \For{$i\leftarrow 1$ \KwTo $|T|-q+1$}{
    $\mathit{qgram} \leftarrow T[i:i+q-1]$\;
    \lIf{$\mathit{qgram} \in\mathrm{keys}(\mathbf{S})$}{
      $\mathbf{S}[\mathit{qgram}] \leftarrow
      \mathbf{S}[\mathit{qgram}] + w[i]$\;      
    }
    \lElse{\lIf{$w[i]>0$}{$\mathbf{S}[\mathit{qgram}] \leftarrow w[i]$\tcp*[l]{new $q$-gram}}}
  }
  \lFor{$\mathit{qgram}\in\mathrm{keys}(\mathbf{S})$}{\KwReport{$(\mathit{qgram},\mathbf{S}[\mathit{qgram}])$}}
\end{algorithm2e}

\begin{algorithm2e}[t]
  \caption{A variant of Algorithm~\ref{algo:naive_sa} for weighted
    $q$-gram frequencies.}
  \label{algo:weighted_naive_sa}
  \SetKwInput{KwOut}{Output}
  \SetKw{KwReport}{Report}
  \SetKw{KwOr}{or}
  \SetKw{KwAnd}{and}
  \KwIn{string $T$, array of integers $w$ of length $|T|$, integer $q \geq 1$}
  \KwOut{$(i, \sum_{i\in\Occ(T,P)}w[i])$ for all $P\in\Sigma^q$ where
    $\sum_{i\in\Occ(T,P)}w[i] > 0$
    and some position $i\in\Occ(T,P)$.}
  $\SA \leftarrow \makeSuffixArray(T)$;
  $\LCP \leftarrow \makeLCPArray(T,SA)$;
  $\mathit{count} \leftarrow 1$\;  
  \For{$i\leftarrow 2$ \KwTo $|T|+1$}{
    \If{$i = |T|+1$ \KwOr $\LCP[i] < q$}{
      \lIf{$\mathit{count} > 0$}{
        \KwReport{$(\SA[i-1], \mathit{count})$}; $\mathit{count} \leftarrow 0$\;
      }

    }
    \lIf{$i \leq |T|$ \KwAnd $\SA[i] \leq |T| - q + 1$}{
      $\mathit{count} \leftarrow \mathit{count}+w[\SA[i]]$\;
    }
  }
\end{algorithm2e}

\section{Applications and Extensions}
We showed that for an SLP ${\mathcal{T}}$ of size $n$ representing string $T$,
$q$-gram frequency problems on $T$ can be reduced to
{\em weighted} $q$-gram frequency problems on a string $z$ of length $O(qn)$,
which can be much shorter than $T$.
This idea can further be applied to obtain efficient compressed
string processing algorithms for
interesting problems which we briefly introduce below.

\subsection{$q$-gram Spectrum Kernel}
A string kernel is a function that computes the inner product between two strings 
which are mapped to some feature space.
It is used when classifying string or text
data using methods such as Support Vector Machines (SVMs), and is usually the
dominating factor in the time complexity of SVM learning and classification.
A $q$-gram spectrum kernel~\cite{leslie02:_spect_kernel} considers the
feature space of $q$-grams.
For string $T$, let $\phi_q(T) = (|\Occ(T,p)|)_{p\in\Sigma^q}$.
The kernel function is defined as 
$K_q(T_1,T_2) = 
\langle\phi_q(T_1),\phi_q(T_2)\rangle =
\sum_{p\in\Sigma^q} |\Occ(T_1,p)| |\Occ(T_2,p)|$.
The calculation of the kernel function amounts to 
summing up the product of occurrence frequencies
in strings $T_1$ and $T_2$ for all 
$q$-grams which occur in both $T_1$ and $T_2$.
This can be done in $O(|T_1|+|T_2|)$ time using suffix arrays. 
For two SLPs ${\mathcal T_1}$ and ${\mathcal T_2}$ of 
size $n_1$ and $n_2$ representing strings $T_1$ and $T_2$,
respectively, the $q$-gram spectrum kernel $K_q(T_1,T_2)$ can be computed 
in $O(q(n_1+n_2))$ time by a slight modification of our algorithm.


\subsection{Optimal Substring Patterns of Length $q$}
Given two sets of strings, finding string patterns that are
frequent in one set and not in the other, is an important
problem in string data mining, with many problem formulations and the
types of patterns to be considered,
e.g.: in Bioinformatics~\cite{brazma98:_approac},
Machine Learning (optimal patterns~\cite{arimura98:_fast_algor_discov_optim_strin}),
and more recently KDD
(emerging patterns~\cite{chan03:_minin_emerg_subst}).
A simple optimal $q$-gram pattern discovery problem can be defined as follows:
Let  $\mathbf{T_1}$ 
and $\mathbf{T_2}$ 
be two multisets of strings.
The problem is to find the $q$-gram $p$ which gives
the highest (or lowest) score according to some scoring function
that depends only on $|\mathbf{T_1}|$, $|\mathbf{T_2}|$,
and the number of strings respectively in $\mathbf{T_1}$ and $\mathbf{T_2}$ for
which $p$ is a substring.
For uncompressed strings, the problem can be solved in $O(N)$ time,
where $N$ is the total length of the strings in both $\mathbf{T_1}$ and
$\mathbf{T_2}$, by applying the algorithm of~\cite{HuiCPM92} to two sets of strings.
For the SLP compressed version of this problem,
the input is two multisets of SLPs,
each representing strings in $\mathbf{T_1}$ and $\mathbf{T_2}$.
If $n$ is the total number of variables used in all
of the SLPs, the problem can be solved in $O(qn)$ time.


\subsection{Different Lengths}
The ideas in this paper can be used to consider all substrings of length {\em not only} $q$, 
but {\em all lengths up-to} $q$, with some modifications.
For the applications discussed above, 
although the number of such substrings increases to $O(q^2n)$,
the $O(qn)$ time complexity can be maintained
by using standard techniques of suffix
arrays~\cite{gusfield97:_algor_strin_trees_sequen,Kasai01}.
This is because there exist only $O(qn)$ substring with distinct frequencies
(corresponding to nodes of the suffix tree),
and the computations of the extra substrings can be summarized with
respect to them.

\ignore{
\subsection{Collage System}
Collage system~\cite{KidaCollageTCS} is a more general framework for modeling
various compression methods.
In addition to the simple concatenation operation used in SLPs,
it includes operations for repetition and prefix/suffix truncation 
of variables.
For example, while a LZ77 encoded representation of size $m$ may require
$O(m^2\log m)$ size when represented as an SLP, 
it can be represented as a collage system of size $O(m\log m)$~\cite{GasieniecSWAT96}.
Our algorithm can be extended to run in $O((q+h)n)$ time on
collage system of size $n$, where $h$ is the height of the derivation tree.
The increase in complexity comes from the handling of truncated
variables. Details will be presented in a forthcoming paper.
}

\section{Computational Experiments}
We implemented 4 algorithms (NMP, NSA, SMP, SSA) that 
count the frequencies of all $q$-grams in a given text.
NMP (Algorithm~\ref{algo:naive}) and NSA (Algorithm~\ref{algo:naive_sa})
work on the uncompressed text.
SMP (Algorithm~\ref{algo:slpmain} + 
Algorithm~\ref{algo:weighted_naive})
and
SSA (Algorithm~\ref{algo:slpmain} + 
Algorithm~\ref{algo:weighted_naive_sa})
work on SLPs.
The algorithms were implemented using the C++ language.
We used {\tt std::map} from the Standard Template Library (STL)
for the associative array implementation.
\footnote{We also used {\tt std::hash\_map} but omit the results due
  to lack of space. Choosing the hashing function to use is difficult,
  and we note that its performance was unstable and sometimes very bad
  when varying $q$.}
For constructing suffix arrays, we used the divsufsort
library\footnote{\url{http://code.google.com/p/libdivsufsort/}}
developed by Yuta Mori. This implementation is not linear time
in the worst case, but has been empirically shown to be one of
the fastest implementations on various data.

All computations were conducted on a Mac Xserve (Early 2009)
with 2 x 2.93GHz Quad Core Xeon processors and 24GB Memory,
only utilizing a single process/thread at once.
The program was compiled using the GNU C++ compiler ({\tt g++}) 4.2.1
with the {\tt -fast} option for optimization.
The running times are measured in seconds, starting from after reading
the uncompressed text into memory for NMP and NSA, and
after reading the SLP that represents the text into memory for SMP and
SSA.
Each computation is repeated at least 3 times, and the average is taken.
\subsection{Fibonacci Strings}

\begin{wrapfigure}[12]{r}{0.48\textwidth}
\vspace{-1.2cm}
  \begin{center}
    \includegraphics[width=0.47\textwidth]{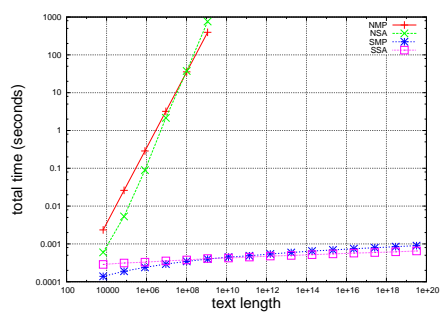}
    \caption{Running times of NMP, NSA, SMP, SSA on Fibonacci strings
      for $q=50$.}
    \label{fig:graph_fib}
  \end{center}
\end{wrapfigure}

The $i$ th Fibonacci string $F_i$ can be represented by the following
SLP:
$X_1 = \mathtt{b}$, $X_2 = \mathtt{a}$,
$X_i = X_{i-1} X_{i-2}$ for $i > 2$,
and $F_i = \derive(X_i)$.
Fig.~\ref{fig:graph_fib} shows the running times on
Fibonacci strings $F_{20}, F_{25}, \ldots, F_{95}$, for $q=50$.
Although this is an extreme case since Fibonacci strings can
be exponentially compressed, we can see that SMP and
SSA that work on the SLP are clearly faster than NMP and NSA which
work on the uncompressed string.

\subsection{Pizza \& Chili Corpus}
\begin{wrapfigure}[14]{r}{0.5\textwidth}
  \vspace{-1.0cm}
  \begin{center}
    \includegraphics[width=0.47\textwidth]{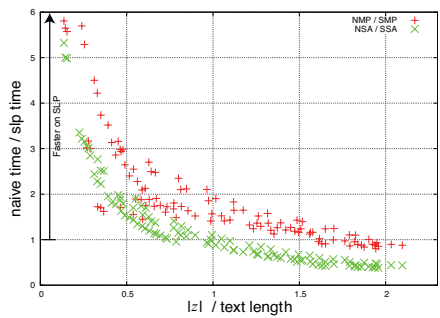}
    \caption{Time ratios NMP/SMP and NSA/SSA 
      plotted against ratio $|z|/|T|$.}
    \label{fig:speed_up}
  \end{center}
\end{wrapfigure}

We also applied the algorithms on texts
XML, DNA, ENGLISH, and PROTEINS, with sizes 50MB, 100MB, and 200MB,
obtained from the Pizza \& Chili
Corpus\footnote{\url{http://pizzachili.dcc.uchile.cl/texts.html}}.
We used RE-PAIR~\cite{LarssonDCC99} to obtain SLPs for this data.
%

Table~\ref{table:pizza} shows the running times for all algorithms and
data, where $q$ is varied from $2$ to $10$.
We see that for all corpora, 
SMP and SSA running on SLPs are actually faster than
NMP and NSA running on uncompressed text, when $q$ is small.
Furthermore, SMP is faster than SSA when $q$ is smaller.
Interestingly for XML, the SLP versions are faster even for $q$
up to $9$.

Fig.~\ref{fig:speed_up} shows the same results
as time ratio: NMP/SMP and NSA/ SSA, plotted against ratio:
(length of $z$ in Algorithm~\ref{algo:slpmain})/(length of uncompressed text).
As expected, the SLP versions are basically faster than their uncompressed
counterparts, when $|z|/\mbox{(text length)}$ is less than $1$,
since the SLP versions run the weighted versions of the uncompressed algorithms
on a text of length $|z|$.
SLPs generated by other grammar based compression algorithms showed
similar tendencies (data not shown).



\begin{table}[t]
  \caption{Running times in seconds for data from the Pizza \& Chili Corpus.
    Bold numbers represent the fastest time for each data and $q$.
    Times for SMP and SSA are prefixed with $\triangleright$,
    if they become fastest when all algorithms start from the SLP representation, 
    i.e., NMP and NSA require time for decompressing the SLP (denoted
    by decompression time).
    The bold horizontal lines show the boundary where $|z|$ in
    Algorithm~\ref{algo:slpmain}
    exceeds the uncompressed text length.
  }
  \label{table:pizza}
  \begin{center}
    \scriptsize
    \setlength{\tabcolsep}{1pt}
    \renewcommand{\rmdefault}{ptm}
    \renewcommand{\sfdefault}{phv}
    \renewcommand{\ttdefault}{pcr}
    \normalfont
    
\begin{tabular}{|c|r|r|r|r|r|r|r|r|r|r|r|r|r|r|r|}
 \hline
\multicolumn{16}{|c|}{XML} \\ \hline
& \multicolumn{5}{c|}{50MB} &  \multicolumn{5}{c|}{100MB} &  \multicolumn{5}{c|}{200MB} \\
& \multicolumn{3}{r}{SLP Size:} & \multicolumn{2}{r|}{2,702,383}& \multicolumn{3}{r}{SLP Size:} & \multicolumn{2}{r|}{5,059,578}& \multicolumn{3}{r}{SLP Size:} & \multicolumn{2}{r|}{9,541,590}\\& \multicolumn{3}{r}{decompression time:} & \multicolumn{2}{r|}{0.82 secs}& \multicolumn{3}{r}{decompression time:} & \multicolumn{2}{r|}{1.73 secs}& \multicolumn{3}{r}{decompression time:} & \multicolumn{2}{r|}{3.52 secs}\\ \hline
$q$ & \multicolumn{1}{c|}{$|z|$} & NMP & NSA & SMP & SSA & \multicolumn{1}{c|}{$|z|$} & NMP & NSA & SMP & SSA & \multicolumn{1}{c|}{$|z|$} & NMP & NSA & SMP & SSA\\ \hline
2 & 8,106,861 & 5.9 & 9.8 & \textbf{1.1} & 2.0 & 15,178,446 & 12.0 & 21.0 & \textbf{2.1} & 4.3 & 28,624,482 & 24.7 & 46.9 & \textbf{4.3} & 8.9 \\
\cline{1-1} \cline{2-6} \cline{7-11} \cline{12-16}
3 & 13,413,565 & 13.0 & 9.8 & \textbf{2.5} & 3.2 & 25,160,162 & 27.8 & 21.1 & \textbf{4.9} & 6.8 & 47,504,478 & 58.7 & 46.1 & \textbf{9.8} & 14.3 \\
\cline{1-1} \cline{2-6} \cline{7-11} \cline{12-16}
4 & 18,364,951 & 21.0 & 9.8 & 5.7 & \textbf{4.7} & 34,581,658 & 47.2 & 21.3 & 11.3 & \textbf{9.9} & 65,496,619 & 100.3 & 46.2 & 22.5 & \textbf{20.0} \\
\cline{1-1} \cline{2-6} \cline{7-11} \cline{12-16}
5 & 22,873,060 & 28.7 & 9.8 & 10.2 & \textbf{5.9} & 43,275,004 & 63.0 & 21.1 & 20.4 & \textbf{12.5} & 82,321,682 & 139.4 & 46.2 & 40.1 & \textbf{25.1} \\
\cline{1-1} \cline{2-6} \cline{7-11} \cline{12-16}
6 & 27,032,514 & 35.2 & 9.8 & 14.9 & \textbf{7.1} & 51,354,178 & 77.1 & 21.0 & 29.6 & \textbf{14.8} & 98,124,580 & 172.4 & 46.3 & 59.4 & \textbf{30.2} \\
\cline{1-1} \cline{2-6} \cline{7-11} \cline{12-16}
7 & 30,908,898 & 40.0 & 9.8 & 19.4 & \textbf{8.2} & 58,935,352 & 87.4 & 21.1 & 38.9 & \textbf{16.9} & 113,084,186 & 197.7 & 46.8 & 78.5 & \textbf{34.9} \\
\cline{1-1} \cline{2-6} \cline{7-11} \cline{12-16}
8 & 34,559,523 & 44.3 & 9.8 & 26.0 & \textbf{9.3} & 66,104,075 & 97.5 & 21.1 & 52.5 & \textbf{19.1} & 127,316,007 & 218.3 & 46.3 & 103.9 & \textbf{39.9} \\
\cline{1-1} \cline{2-6} \cline{7-11} \cline{12-16}
9 & 37,983,150 & 49.0 & \textbf{9.8} & 31.0 & $\triangleright$ 10.1  & 72,859,310 & 105.3 & 21.1 & 60.9 & \textbf{20.9} & 140,846,749 & 234.6 & 46.3 & 124.7 & \textbf{44.1} \\
\cline{1-1} \cline{2-6} \cline{7-11} \cline{12-16}
10 & 41,253,257 & 52.5 & \textbf{9.9} & 35.8 & 11.2 & 79,300,797 & 115.3 & \textbf{21.2} & 72.2 & $\triangleright$ 22.7  & 153,806,891 & 253.6 & \textbf{46.3} & 148.8 & $\triangleright$ 48.8  \\
\cline{1-1} \cline{2-6} \cline{7-11} \cline{12-16}
 \hline \hline
\multicolumn{16}{|c|}{DNA} \\ \hline
& \multicolumn{5}{c|}{50MB} &  \multicolumn{5}{c|}{100MB} &  \multicolumn{5}{c|}{200MB} \\
& \multicolumn{3}{r}{SLP Size:} & \multicolumn{2}{r|}{6,406,324}& \multicolumn{3}{r}{SLP Size:} & \multicolumn{2}{r|}{12,233,978}& \multicolumn{3}{r}{SLP Size:} & \multicolumn{2}{r|}{23,171,463}\\& \multicolumn{3}{r}{decompression time:} & \multicolumn{2}{r|}{1.23 secs}& \multicolumn{3}{r}{decompression time:} & \multicolumn{2}{r|}{2.54 secs}& \multicolumn{3}{r}{decompression time:} & \multicolumn{2}{r|}{5.21 secs}\\ \hline
$q$ & \multicolumn{1}{c|}{$|z|$} & NMP & NSA & SMP & SSA & \multicolumn{1}{c|}{$|z|$} & NMP & NSA & SMP & SSA & \multicolumn{1}{c|}{$|z|$} & NMP & NSA & SMP & SSA\\ \hline
2 & 19,218,924 & 2.2 & 13.7 & \textbf{1.9} & 5.7 & 36,701,886 & 4.7 & 30.5 & \textbf{3.9} & 12.6 & 69,514,341 & 9.8 & 70.0 & \textbf{8.0} & 26.1 \\
\cline{1-1} \cline{2-6} \cline{7-11} \cline{12-16}
3 & 32,030,826 & 4.4 & 13.7 & \textbf{3.0} & 8.6 & 61,169,030 & 9.1 & 30.5 & \textbf{5.8} & 18.6 & 115,856,038 & 18.7 & 70.1 & \textbf{11.8} & 38.8 \\
\cline{1-1} \cline{2-6} \cline{7-11} \cline{12-16}
4 & 44,833,624 & 6.5 & 13.7 & \textbf{4.5} & 12.3 & 85,624,856 & 13.4 & 30.5 & \textbf{8.9} & 25.3 & 162,182,697 & 28.0 & 70.0 & \textbf{17.6} & 52.9 \\
\cline{1-1} \wcline{2-6} \wcline{7-11} \cline{12-16}
5 & 57,554,843 & 8.6 & 13.8 & \textbf{6.7} & 15.5 & 109,976,706 & 17.8 & 30.5 & \textbf{13.1} & 32.3 & 208,371,656 & 37.0 & 69.9 & \textbf{26.3} & 67.9 \\
\cline{1-1} \cline{2-6} \cline{7-11} \wcline{12-16}
6 & 69,972,618 & 11.1 & 13.7 & \textbf{10.1} & 19.0 & 133,890,719 & 23.3 & 31.0 & \textbf{19.8} & 40.0 & 253,939,731 & 47.6 & 70.2 & \textbf{39.5} & 86.6 \\
\cline{1-1} \cline{2-6} \cline{7-11} \cline{12-16}
7 & 81,771,222 & 15.3 & \textbf{13.6} & $\triangleright$ 14.7  & 23.0 & 156,832,841 & 31.0 & 30.5 & \textbf{28.6} & 49.3 & 298,014,802 & 63.2 & 69.9 & \textbf{56.1} & 104.5 \\
\cline{1-1} \cline{2-6} \cline{7-11} \cline{12-16}
8 & 92,457,893 & 21.1 & \textbf{13.6} & 22.9 & 27.3 & 177,888,984 & 42.2 & \textbf{30.5} & 44.9 & 58.5 & 338,976,517 & 85.4 & \textbf{69.9} & 88.5 & 126.3 \\
\cline{1-1} \cline{2-6} \cline{7-11} \cline{12-16}
9 & 101,852,490 & 33.0 & \textbf{13.7} & 42.8 & 31.4 & 196,656,282 & 65.7 & \textbf{30.4} & 81.5 & 67.5 & 375,928,060 & 132.1 & \textbf{69.9} & 159.3 & 147.9 \\
\cline{1-1} \cline{2-6} \cline{7-11} \cline{12-16}
10 & 109,902,230 & 56.5 & \textbf{13.7} & 65.9 & 34.9 & 213,075,531 & 113.2 & \textbf{30.5} & 129.2 & 75.9 & 408,728,193 & 226.0 & \textbf{69.9} & 248.4 & 166.3 \\
\cline{1-1} \cline{2-6} \cline{7-11} \cline{12-16}
 \hline \hline
\multicolumn{16}{|c|}{ENGLISH} \\ \hline
& \multicolumn{5}{c|}{50MB} &  \multicolumn{5}{c|}{100MB} &  \multicolumn{5}{c|}{200MB} \\
& \multicolumn{3}{r}{SLP Size:} & \multicolumn{2}{r|}{4,861,619}& \multicolumn{3}{r}{SLP Size:} & \multicolumn{2}{r|}{10,063,953}& \multicolumn{3}{r}{SLP Size:} & \multicolumn{2}{r|}{18,945,126}\\& \multicolumn{3}{r}{decompression time:} & \multicolumn{2}{r|}{1.15 secs}& \multicolumn{3}{r}{decompression time:} & \multicolumn{2}{r|}{2.43 secs}& \multicolumn{3}{r}{decompression time:} & \multicolumn{2}{r|}{5.07 secs}\\ \hline
$q$ & \multicolumn{1}{c|}{$|z|$} & NMP & NSA & SMP & SSA & \multicolumn{1}{c|}{$|z|$} & NMP & NSA & SMP & SSA & \multicolumn{1}{c|}{$|z|$} & NMP & NSA & SMP & SSA\\ \hline
2 & 14,584,329 & 5.7 & 13.1 & \textbf{1.9} & 4.5 & 30,191,214 & 11.5 & 28.2 & \textbf{4.2} & 10.3 & 56,834,703 & 23.5 & 64.2 & \textbf{8.5} & 21.7 \\
\cline{1-1} \cline{2-6} \cline{7-11} \cline{12-16}
3 & 24,230,676 & 11.4 & 13.0 & \textbf{4.0} & 7.4 & 50,196,054 & 23.8 & 28.2 & \textbf{8.3} & 16.8 & 94,552,062 & 50.3 & 65.5 & \textbf{16.5} & 34.9 \\
\cline{1-1} \cline{2-6} \cline{7-11} \cline{12-16}
4 & 33,655,433 & 20.0 & 12.9 & \textbf{8.2} & 9.9 & 69,835,185 & 42.1 & 28.2 & \textbf{17.6} & 22.1 & 131,758,513 & 89.7 & 64.2 & \textbf{34.1} & 45.8 \\
\cline{1-1} \cline{2-6} \cline{7-11} \cline{12-16}
5 & 42,640,982 & 33.1 & 12.9 & 16.1 & \textbf{12.7} & 88,711,756 & 72.6 & \textbf{28.2} & 35.1 & $\triangleright$ 28.6  & 167,814,701 & 156.9 & 64.2 & 68.2 & \textbf{59.7} \\
\cline{1-1} \cline{2-6} \wcline{7-11} \cline{12-16}
6 & 51,061,064 & 49.5 & \textbf{12.9} & 27.1 & 15.5 & 106,583,131 & 111.8 & \textbf{28.5} & 59.7 & 35.3 & 202,293,814 & 240.8 & \textbf{64.4} & 116.1 & 74.3 \\
\cline{1-1} \wcline{2-6} \cline{7-11} \wcline{12-16}
7 & 58,791,311 & 65.1 & \textbf{12.9} & 40.1 & 18.4 & 123,180,654 & 143.6 & \textbf{28.3} & 88.3 & 42.3 & 234,664,404 & 313.7 & \textbf{64.3} & 173.5 & 90.3 \\
\cline{1-1} \cline{2-6} \cline{7-11} \cline{12-16}
8 & 65,777,414 & 79.6 & \textbf{12.9} & 59.1 & 20.8 & 138,382,443 & 176.8 & \textbf{28.3} & 131.3 & 48.5 & 264,668,656 & 385.9 & \textbf{64.8} & 256.7 & 104.5 \\
\cline{1-1} \cline{2-6} \cline{7-11} \cline{12-16}
9 & 71,930,623 & 92.7 & \textbf{12.9} & 74.2 & 23.0 & 152,010,306 & 207.8 & \textbf{28.5} & 166.0 & 54.2 & 291,964,684 & 454.6 & \textbf{64.5} & 335.0 & 118.0 \\
\cline{1-1} \cline{2-6} \cline{7-11} \cline{12-16}
10 & 77,261,995 & 105.3 & \textbf{13.0} & 89.7 & 25.1 & 164,021,382 & 235.9 & \textbf{28.4} & 205.2 & 59.8 & 316,387,791 & 521.2 & \textbf{64.7} & 425.3 & 131.4 \\
\cline{1-1} \cline{2-6} \cline{7-11} \cline{12-16}
 \hline \hline
\multicolumn{16}{|c|}{PROTEINS} \\ \hline
& \multicolumn{5}{c|}{50MB} &  \multicolumn{5}{c|}{100MB} &  \multicolumn{5}{c|}{200MB} \\
& \multicolumn{3}{r}{SLP Size:} & \multicolumn{2}{r|}{10,357,053}& \multicolumn{3}{r}{SLP Size:} & \multicolumn{2}{r|}{18,806,316}& \multicolumn{3}{r}{SLP Size:} & \multicolumn{2}{r|}{32,375,988}\\& \multicolumn{3}{r}{decompression time:} & \multicolumn{2}{r|}{1.67 secs}& \multicolumn{3}{r}{decompression time:} & \multicolumn{2}{r|}{3.51 secs}& \multicolumn{3}{r}{decompression time:} & \multicolumn{2}{r|}{7.05 secs}\\ \hline
$q$ & \multicolumn{1}{c|}{$|z|$} & NMP & NSA & SMP & SSA & \multicolumn{1}{c|}{$|z|$} & NMP & NSA & SMP & SSA & \multicolumn{1}{c|}{$|z|$} & NMP & NSA & SMP & SSA\\ \hline
2 & 31,071,084 & 4.5 & 14.5 & \textbf{4.0} & 10.2 & 56,418,873 & 9.0 & 32.2 & \textbf{7.6} & 20.4 & 97,127,889 & 18.0 & 69.0 & \textbf{13.6} & 38.0 \\
\cline{1-1} \cline{2-6} \cline{7-11} \cline{12-16}
3 & 51,749,628 & 9.4 & 14.5 & \textbf{7.6} & 16.2 & 93,995,974 & 18.7 & 32.1 & \textbf{14.1} & 32.3 & 161,825,337 & 37.3 & 69.0 & \textbf{25.5} & 60.0 \\
\cline{1-1} \wcline{2-6} \wcline{7-11} \wcline{12-16}
4 & 70,939,655 & 22.4 & \textbf{14.3} & 21.3 & 24.6 & 129,372,571 & 45.4 & \textbf{32.2} & 39.1 & 49.0 & 223,413,554 & 91.5 & \textbf{69.0} & $\triangleright$ 69.1  & 91.8 \\
\cline{1-1} \cline{2-6} \cline{7-11} \cline{12-16}
5 & 86,522,157 & 66.6 & \textbf{14.4} & 54.9 & 32.2 & 159,110,124 & 137.5 & \textbf{32.2} & 100.5 & 65.6 & 275,952,088 & 270.9 & \textbf{69.4} & 175.5 & 125.1 \\
\cline{1-1} \cline{2-6} \cline{7-11} \cline{12-16}
6 & 95,684,819 & 116.7 & \textbf{14.5} & 107.7 & 37.6 & 178,252,162 & 251.5 & \textbf{32.3} & 204.4 & 79.1 & 311,732,866 & 502.8 & \textbf{69.4} & 356.0 & 151.7 \\
\cline{1-1} \cline{2-6} \cline{7-11} \cline{12-16}
7 & 99,727,910 & 142.8 & \textbf{14.5} & 143.7 & 40.8 & 187,623,783 & 327.6 & \textbf{32.4} & 299.8 & 85.6 & 330,860,933 & 675.2 & \textbf{69.7} & 586.4 & 168.0 \\
\cline{1-1} \cline{2-6} \cline{7-11} \cline{12-16}
8 & 100,877,101 & 147.8 & \textbf{14.4} & 166.3 & 42.5 & 190,898,844 & 343.0 & \textbf{32.4} & 363.6 & 88.7 & 337,898,827 & 731.0 & \textbf{69.6} & 771.8 & 175.5 \\
\cline{1-1} \cline{2-6} \cline{7-11} \cline{12-16}
9 & 101,631,544 & 149.3 & \textbf{14.4} & 171.6 & 42.8 & 192,736,305 & 348.1 & \textbf{32.4} & 393.0 & 91.2 & 341,831,651 & 742.2 & \textbf{69.7} & 820.3 & 181.8 \\
\cline{1-1} \cline{2-6} \cline{7-11} \cline{12-16}
10 & 102,636,144 & 150.5 & \textbf{14.4} & 178.6 & 43.4 & 195,044,390 & 350.4 & \textbf{32.5} & 404.2 & 93.1 & 346,403,103 & 747.7 & \textbf{69.7} & 831.9 & 185.8 \\
\cline{1-1} \cline{2-6} \cline{7-11} \cline{12-16}

\end{tabular}

  \end{center}
\end{table}



\section{Conclusion}

We presented an $O(qn)$ time and space algorithm for calculating all
$q$-gram frequencies in a string, given an SLP of size $n$
representing the string.
This solves, much more efficiently, a more general problem than
considered in previous work.
Computational experiments on various real texts showed that the algorithms
run faster than algorithms that work on the
uncompressed string, when $q$ is small.
Although larger values of $q$ allow us to capture longer character dependencies,
the dimensionality of the features increases, making the space of
occurring $q$-grams sparse.
Therefore, meaningful values of $q$ for 
typical applications can be fairly small in practice (e.g. $3\sim 6$),
so our algorithms have practical value.

A future work is extending our algorithms that work on SLPs,
to algorithms that work on collage systems~\cite{KidaCollageTCS}.
A Collage System is a more general framework for modeling various
compression methods. 
In addition to the simple concatenation operation used in SLPs,
it includes operations for repetition and prefix/suffix truncation 
of variables.
\ignore{
For example, while a LZ77 encoded representation of size $m$ may require
$O(m^2\log m)$ size when represented as an SLP, 
it can be represented as a collage system of size $O(m\log m)$~\cite{GasieniecSWAT96}.
}

This is the first paper to show the potential of the compressed
string processing approach in developing efficient and {\em practical} algorithms
for problems in the field of string mining and classification.
More and more efficient algorithms for various processing of
text in compressed representations are becoming available.
We believe texts will eventually be stored in compressed form by default,
since not only will it save space, but it will also have the added benefit of being able to conduct
various computations on it more efficiently later on, when needed.



\bibliographystyle{splncs03}
\bibliography{ref}
\end{document}